\documentclass[journal,10pt]{IEEEtran}
\IEEEoverridecommandlockouts
\usepackage{cite}
\usepackage{dblfloatfix}
\usepackage{subcaption}
\usepackage{overpic}
\usepackage{amsmath,amssymb,amsfonts}
\usepackage{graphicx}
\usepackage{textcomp}
\usepackage{xcolor}
\usepackage{float}
\usepackage{amsthm}
\usepackage{graphicx}
\usepackage{epstopdf}
\usepackage{amsmath,bm}
\usepackage{amsfonts}
\usepackage{amssymb}
\usepackage{color}
\usepackage{multirow}
\usepackage{multicol}
\usepackage{soul,xcolor}
\usepackage{algorithm}
\usepackage{algpseudocode}

\usepackage{comment}

\usepackage[subtle]{savetrees}

\theoremstyle{plain}

\newtheorem{lemma}{Lemma}

\newtheorem{corollary}{Corollary}

\newcommand{\vect}[1]{\mathbf{#1}}

\def\Htran{\mbox{\tiny $\mathrm{H}$}}

\def\CN{\mathcal{N}_{\mathbb{C}}}

\begin{document}

\title{Stream-Adaptive Quantization and Power Allocation in Fronthaul-Constrained MIMO Systems}

\author{\"Ozlem Tu\u{g}fe Demir and Emil Bj{\"o}rnson 
 
\thanks{  \"O. T. Demir is with the Department of Electrical and Electronics Engineering, Bilkent University, Ankara, T\"urkiye (ozlemtugfedemir@bilkent.edu.tr).  E.~Bj\"ornson is with the Department of Communication Sytems, KTH Royal Institute of Technology, 10044 Stockholm, Sweden (emilbjo@kth.se).  \newline \indent
This work was carried out within the scope of the project 122C149 – Intelligent End-to-End Design of Energy-Efficient and Hardware Impairments-Aware Cell-Free Massive MIMO for Beyond 5G. \"O. T. Demir was supported by the 2232-B International Fellowship for Early Stage Researchers Programme funded by the Scientific and Technological Research Council of Türkiye (TÜBİTAK).  The work by E. Björnson was supported by a fellowship from the Knut and Alice Wallenberg Foundation.
}
}

\maketitle
\begin{abstract} 
Many wireless systems divide the baseband processing between two locations, interconnected by a fronthaul.
This paper examines the impact of fronthaul quantization on multiple-input multiple-output (MIMO) systems. Starting from a Bussgang-based analysis of quantized single-input single-output (SISO) channels, we extend the framework to MIMO and derive a capacity lower bound under fronthaul quantization, where the receive combining is performed before the quantization. To maximize the sum rate, we propose a joint bit and power allocation (JBP-Alloc) scheme that efficiently distributes fronthaul bits and transmit power across active data streams. Asymptotic analysis shows that uniform bit allocation becomes optimal at high SNR. Numerical results confirm that JBP-Alloc outperforms uniform allocation and quantization-unaware water-filling, and achieves the same performance as Greedy bit allocation but with substantially lower computational complexity.
\end{abstract}
\begin{IEEEkeywords}Fronthaul limitation, quantization, energy efficiency, MIMO capacity, multi-stream.
\end{IEEEkeywords}

\section{Introduction}

Multiple-input multiple-output (MIMO) systems enable both spatial multiplexing and beamforming gains, thereby achieving much higher data rates compared to the single-input single-output (SISO) counterpart. The capacity of a point-to-point MIMO channel is achieved by precoding based on the singular-value decomposition (SVD), water-filling power allocation, and separate decoding of the multiplexed data streams \cite{telatar1999capacity}. 
In modern Open RAN systems, the MIMO processing is often divided between different places: receive combining is done in the radio unit, and decoding is done in the baseband unit. The capacity-limited fronthaul connection between these units imposes additional quantization errors \cite{khorsandmanesh2023optimized}, which are not captured by the classical MIMO theory.

Quantization of the received uplink signal has been extensively studied in the context of analog-to-digital converters (ADCs) \cite{mo2015capacity,studer2016quantized,nossek2006capacity}, where quantization is applied directly to the received signal.
However, contemporary base stations typically use 12-16 ADC bits to enable advanced digital filtering, in which case the quantization errors are negligible compared to the thermal noise.
However, when the received signals are sent over a limited-capacity fronthaul link, it is desirable to lower the bit resolution, at the cost of creating quantization errors.
This problem has new characteristics: we can preprocess the received signals before quantization, and also use varying bit resolution depending on the channel conditions.
In a MIMO system, we can multiply the received signal by the left singular matrix of the channel to separate the data streams, which can then be quantized independently before being transmitted over the fronthaul. 
This raises important research questions: how should the total number of quantization bits and the transmit power be allocated across the data streams, and how should the number of active streams be determined? 
We will show that classical water-filling is suboptimal in this setting and that significant improvements can be made. 

The main contributions of this paper are:
\begin{itemize}
    \item We derive a lower bound on the achievable rate of quantized MIMO channels with fronthaul quantization.
    \item We formulate the joint optimization problem of transmit power and quantization bit allocation across the MIMO data streams, subject to total fronthaul and power constraints, and propose an efficient algorithm to solve it.
    \item Through asymptotic analysis, we prove that uniform bit allocation becomes optimal in the high-SNR regime.
\end{itemize}

\textit{Outline:} Section~\ref{section:SISO} analyzes the SISO case and derives a mutual information bound. Section~\ref{section:MIMO} extends the analysis to MIMO channels. Section~\ref{section:allocation} formulates and solves the joint bit and power allocation problem. Section~\ref{eq:section-asymptotic} analyzed the asymptotic behavior. Section~\ref{section:results} presents numerical results, and Section~\ref{section:conclusions} concludes the paper.

{\textit{Reproducible Research:} All the simulation results can be reproduced using the
Matlab code available at: https://github.com/ozlemtugfedemir/stream-adaptive-quantization

\section{SISO Analysis and Capacity Lower Bound  }
\label{section:SISO}

We will start our analysis by considering a SISO system and derive a lower bound on the mutual information between the input signal and the quantized noisy output signal. We denote the received signal before quantization as
\begin{align}
    y = hx+n,
\end{align}
where $x\sim\CN(0,P)$ is the desired data signal with power $P$, $h\in \mathbb{C}$ is the deterministic channel coefficient, and $n\sim\CN(0,\sigma^2)$ is the additive independent thermal noise.

We assume that the received signal is quantized before being sent to the baseband unit via the fronthaul. Since the quantization is not done by an ADC but software, we can consider an optimal non-uniform quantizer that takes the Gaussian input and minimizes the distortion noise, which is known as the Lloyd-Max quantizer. This quantizer, which is designed for the input $y$, is denoted by $Q(\cdot)$ and the quantized signal becomes
\begin{align} \label{eq:Qy}
    z = Q(y).
\end{align}
Each quantization interval is represented by its
mean value \cite{demir2020bussgang}: $\mathbb{E}\{y|Q(y)\}=Q(y)$.\footnote{Although we assume a Lloyd–Max quantizer, the subsequent analysis relies only on the conditional mean property implied by Lloyd–Max optimality, rather than on the full optimality of the quantizer design. In particular, this property holds whenever the quantization decoder mapping is optimal, regardless of whether the encoder mapping is optimal \cite{Fletcher2007a,gersho2012vector}. Moreover, the same conditional mean condition holds approximately for uniform scalar quantizers operating in the high-resolution regime \cite{Fletcher2007a}.} The distortion factor $\beta$ is defined as

\begin{align}
    \beta = \frac{\mathbb{E}\{|z-y|^2\}}{C_y},
\end{align}
where $C_y=\mathbb{E}\{|y|^2\}=P|h|^2+\sigma^2$.
Next, we find a lower bound on the mutual information between $x$ and $z$. 

 We can express the output of the quantizer function in \eqref{eq:Qy} using the Bussgang decomposition \cite{demir2020bussgang} as
\begin{align}
    z = By+\eta = B(hx+n)+\eta,
\end{align}
where $B$ is the Bussgang gain and $\eta$ is the distortion noise, which is uncorrelated with $y$ by construction. The Bussgang gain $B$ is computed as \cite{demir2020bussgang}
\begin{align}
    B= \frac{\mathbb{E}\{zy^*\}}{C_y}.
\end{align}
Then, one can easily show that
\begin{align}
    \mathbb{E}\{\eta y^*\} &= \mathbb{E}\{(z-By)y^*\} = \mathbb{E}\{zy^*\}-BC_y \nonumber\\
    &= \mathbb{E}\{zy^*\}-\frac{\mathbb{E}\{zy^*\}}{C_y}C_y = 0.
\end{align}
The next lemma shows that $\eta$ is also uncorrelated with $x$.

\begin{lemma} \label{lemma:SISO}
The distortion noise $\eta$, which is uncorrelated with $y$, is also uncorrelated with $x$, i.e., $\mathbb{E}\{\eta x^*\}=0$.
\begin{proof}
    We follow the scalar version of the proof provided for channel estimation in \cite[App.~A]{li2017channel}, adapting it to the case of data detection. We start by writing
    \begin{align}
        \mathbb{E}\{\eta x^*\}& = \mathbb{E}\{(Q(y)-By)x^*\} \nonumber\\
        & = \mathbb{E}\{\mathbb{E}\{(Q(y)-By)x^*| y\}\} \nonumber \\
        &=\mathbb{E}\{(Q(y)-By)\mathbb{E}\{x^*| y\}\} \label{eq:iterated-expectation},
    \end{align}
where we have used the law of iterated expectation, and that once $y$ is given, $(Q(y)-By)$ is deterministic. Since $y=hx+n$ is a Gaussian signal corrupted by Gaussian noise, $\mathbb{E}\{x^*| y\}$ in \eqref{eq:iterated-expectation}, which is the minimum mean-squared error (MMSE) estimate of $x^*$ given $y$, is also the linear MMSE, given as
\begin{align}
    \mathbb{E}\{x^*|y\} = \frac{Ph}{C_y}y^*. \label{eq:MMSE}
\end{align}
Inserting the MMSE estimate in \eqref{eq:MMSE} into \eqref{eq:iterated-expectation}, we prove that
\begin{align}
        \mathbb{E}\{\eta x^*\} = \mathbb{E}\left\{\eta \frac{Ph}{C_y} y^*\right\} = \frac{Ph}{C_y}\mathbb{E}\{\eta y^*\} = 0,
        \end{align}
        where we use the uncorrelatedness of $\eta$ of $y$.
\end{proof}
\end{lemma}

Since $\eta$ is uncorrelated with $x$, due to Lemma~\ref{lemma:SISO}, we can express the quantized signal as
\begin{align}
    z= Bhx + Bn+\eta
\end{align}
where $Bn+\eta$ is the effective uncorrelated noise. Then, using the worst-case uncorrelated additive noise theorem \cite{Hassibi2003a}, we can obtain a lower bound on the mutual information as
\begin{align}
   I(x;z)\geq \log_2\left(1+\frac{|B|^2P|h|^2}{C_z-|B|^2P|h|^2}\right) ,\label{eq:information-lower-bound}
\end{align}
where we use the notation $C_z= \mathbb{E}\{|z|^2\}$. 

 As shown in \cite{demir2020bussgang}, the variance of $z$ is
\begin{align}
   & C_z = (1-\beta)C_y= (1-\beta)(P|h|^2+\sigma^2). \label{eq:Cz}
\end{align}
The Bussgang gain $B$ can be shown to be equal to $(1-\beta)$ by using the property $\mathbb{E}\{y|Q(y)\}=Q(y)$  of the quantizer. Then the mutual information lower bound is obtained as a function of $P$, $\beta$, $h$, and $\sigma^2$ as
\begin{align}
   I(x;z)&\geq \log_2\left(1+\frac{(1-\beta)^2P|h|^2}{(1-\beta)(P|h|^2+\sigma^2)-(1-\beta)^2P|h|^2}\right) \nonumber\\
   &= \log_2\left(1+\frac{(1-\beta)P|h|^2}{\beta P|h|^2+\sigma^2}\right). \label{eq:information-lower-bound2}
\end{align}
This is an achievable rate measured in bits per channel use.

\section{MIMO Analysis and Capacity Lower Bound}
\label{section:MIMO}

Building on the approach developed for the SISO case, we now extend the analysis to a point-to-point MIMO channel with a deterministic channel matrix $\vect{H}\in \mathbb{C}^{M\times K}$, where $K$ and $M$ denote the numbers of transmit and receive antennas, respectively. After the receiver processing, the noisy data streams are quantized symbol by symbol before being forwarded to the baseband unit via a fronthaul link. We assume a total fronthaul capacity of $b_{\rm tot}$ bits per channel use. The central question is how to optimally select the number of spatially multiplexed data streams and the quantization resolution for each stream, based on the channel matrix and transmit power. To address this, we derive a lower bound on the MIMO capacity tailored to the assumed fronthaul quantization scheme using the mutual information lower bound derived in Section~\ref{section:SISO}.

We express the compact SVD of the channel matrix as $\vect{H} = \vect{U} \vect{\Sigma} \vect{V}^{\Htran}$, where $\vect{U} \in \mathbb{C}^{M \times r}$ and $\vect{V} \in \mathbb{C}^{K \times r}$ are the semi-unitary matrices containing the left and right singular vectors as their columns, respectively. These $r$ columns correspond to the $r$ non-zero singular values of $\vect{H}$, where $r\leq \min(M,K)$ is the rank of $\vect{H}$. The diagonal matrix $\vect{\Sigma} \in \mathbb{C}^{r \times r}$ contains the $r$ non-zero singular values of $\vect{H}$, arranged in descending order.

In the absence of fronthaul quantization noise, and assuming an additive white Gaussian noise (AWGN) channel, the received signal $\overline{\vect{y}}  \in \mathbb{C}^M$ is modeled as
\begin{align}
    \overline{\vect{y}} = \vect{H} \overline{\vect{x}} + \overline{\vect{n}},
\end{align}
where $\overline{\vect{n}} \sim \CN(\vect{0}, \sigma^2 \vect{I}_M)$ is the receiver noise.

Under a total transmit power constraint $\mathbb{E}\{\|\overline{\vect{x}} \|^2\} = P$, it is well known that the optimal transmission strategy is to send $\overline{\vect{x}} = \vect{V} \vect{x}$, where $\vect{x}$ contains independent entries with power values determined via the water-filling algorithm \cite{bjornson2024introduction}. At the receiver, the capacity-achieving processing is to apply $\vect{U}^{\Htran}$ to $\overline{\vect{y}}$, yielding $r$ independent parallel SISO channels that can be decoded separately. Applying the same transmit and receive processing and assuming Gaussian data signals, we can obtain a lower bound on the MIMO capacity under fronthaul quantization. We first write the $r$ parallel SISO channels as
\begin{align} \label{eq:MIMO-channel}
y_i = s_ix_i + n_i, \quad i=1,\ldots,r,
\end{align}
where $y_i$, $x_i$, $n_i$ are the $i$th elements of  $\vect{U}^{\Htran}\overline{\vect{y}}$, $\vect{x}$, and $\vect{U}^{\Htran}\overline{\vect{n}}$, respectively. Note that $n_1,\ldots,n_r \sim \CN(0,\sigma^2)$ are mutually independent since $\vect{U}$ is  unitary. The $r$ non-zero singular values of $\vect{H}$ are denoted in descending order, i.e., $s_1\geq s_2 \geq \ldots \geq s_r$. Since $\{x_i\}$ and $\{n_i\}$ are independent across $i$, the resulting samples 
$\{y_i = s_ix_i + n_i\}$ are independent across the index $i$. 
As a consequence, the quantized samples $\{z_i = Q(y_i)\}$ are also independent across $i$. By applying Lemma \ref{lemma:SISO}, which establishes that the distortion noise is uncorrelated with its corresponding data signal, we can obtain a lower bound on the capacity of the MIMO channel under fronthaul quantization by summing the achievable data rate from \eqref{eq:information-lower-bound2} for each stream. We obtain the following result.

\begin{corollary} \label{corollary}
Consider the MIMO channel in \eqref{eq:MIMO-channel}, where $x_i \sim\CN(0,p_i)$, $i=1,\ldots,r$. Suppose each received stream $y_i$ is quantized independently using a Lloyd-Max quantizer $Q_i(\cdot)$ with $b_i$ bits, which introduces a distortion factor $\beta_i$. Then, an achievable data rate for this quantized MIMO channel is
\begin{align}
R\left(\left\{p_i,b_i\right\}\right) =
\sum_{i=1}^{r} \log_2 \left(1 + \frac{(1 - \beta_i) p_i s_i^2}{\beta_ip_i s_i^2 + \sigma^2 } \right).
\end{align}
\end{corollary}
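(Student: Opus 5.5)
The plan is to reduce the quantized MIMO channel to $r$ independent quantized SISO channels, each of which is covered by the analysis of Section~\ref{section:SISO}, and then to show that the sum of the per-stream bounds is achievable.

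\textbf{Mutual information as a sum.} The achievable rate for the overall system is the mutual information $I(\vect{x};\vect{z})$, where $\vect{z}=[z_1,\ldots,z_r]^{\mathsf{T}}$ and $z_i=Q_i(y_i)$. The mutual information is the relevant quantity because the baseband unit only observes $\vect{z}$, and $\vect{x}$ is Gaussian with independent entries $x_i\sim\CN(0,p_i)$. Since $x_i$ and $n_i$ are independent across $i$, the pairs $(x_i,z_i)$ are mutually independent across $i$: each $z_i$ is a deterministic function of $(x_i,n_i)$ only. Hence the joint distribution factorizes as $p(\vect{x},\vect{z})=\prod_i p(x_i,z_i)$. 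This gives
\begin{align}
I(\vect{x};\vect{z}) = \sum_{i=1}^{r} I(x_i;z_i),
\end{align}
which follows by writing $h(\vect{x})=\sum_i h(x_i)$ and $h(\vect{x}|\vect{z})=\sum_i h(x_i|z_i)$.

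\textbf{Per-stream bound.} Each term $I(x_i;z_i)$ is exactly the SISO setup of Section~\ref{section:SISO}, with $h=s_i$, $P=p_i$, noise variance $\sigma^2$, and quantizer $Q_i$ with distortion factor $\beta_i$. The Lloyd--Max quantizer with $b_i$ bits satisfies the conditional-mean property. We therefore apply the Bussgang decomposition $z_i=B_iy_i+\eta_i$. Lemma~\ref{lemma:SISO} guarantees $\mathbb{E}\{\eta_i x_i^*\}=0$, and the conditional-mean property gives $B_i=1-\beta_i$ and $C_{z_i}=(1-\beta_i)(p_is_i^2+\sigma^2)$. The worst-case uncorrelated noise argument then yields \eqref{eq:information-lower-bound2} for stream $i$. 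Summing over $i$ gives $R(\{p_i,b_i\})\le I(\vect{x};\vect{z})$, so $R$ is achievable. Operationally, one codes over many channel uses with separate Gaussian codebooks per stream and decodes each stream separately.

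\textbf{Main obstacle.} The delicate step is the factorization of the mutual information. If one skipped it and treated the cross terms only as "uncorrelated" noise, a single SISO bound per stream would still be valid by the chain rule: $I(\vect{x};\vect{z})\ge\sum_i I(x_i;z_i)$ whenever the $x_i$ are independent, since $I(x_i;\vect{z}|x_1,\ldots,x_{i-1})\ge I(x_i;z_i)$. I would present that inequality as a fallback, because it needs only independence of the inputs and the fact that $z_i$ depends on $x_i$ and its own noise. Separately, one must check that $\vect{U}^{\Htran}\overline{\vect{n}}$ has i.i.d.\ $\CN(0,\sigma^2)$ entries; this holds because $\vect{U}$ has orthonormal columns.
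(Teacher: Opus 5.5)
Your proposal is correct and follows the paper's argument. Independence of $(x_i,n_i)$ across streams makes the quantized outputs independent, and the SISO bound \eqref{eq:information-lower-bound2} (via Bussgang, Lemma~\ref{lemma:SISO}, $B_i=1-\beta_i$, and the worst-case uncorrelated noise bound) is applied per stream and summed. Your explicit factorization $I(\vect{x};\vect{z})=\sum_{i=1}^{r} I(x_i;z_i)$, and the chain-rule fallback $I(\vect{x};\vect{z})\ge\sum_{i=1}^{r} I(x_i;z_i)$, simply make rigorous the summing step that the paper states only informally.
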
 

This expression depends on the allocation of transmit powers and quantization bits over the data streams.
In the next section, we optimize these design variables.

\section{ Stream-Adaptive Joint Power and Bit Allocation}
\label{section:allocation}

The rate maximization problem, subject to a total quantization budget of $b_{\rm tot}$ bits per channel use (i.e., $\sum_{i=1}^r b_i = b_{\rm tot}$) and a total transmit power constraint $\sum_{i=1}^r p_i = P$, is formulated as follows:
\begin{align} \label{eq:optimization_problem}
\max_{\substack{\sum_{i=1}^r b_i = b_{\rm tot}, \quad b_i\geq 0, \forall i \\ \sum_{i=1}^r p_i = P, \quad p_i\geq 0, \forall i}} 
\sum_{i=1}^{r} \log_2 \left(1 + \frac{(1 - \beta_i) p_i s_i^2}{\beta_ip_i s_i^2 + \sigma^2} \right).
\end{align}

To design an efficient algorithm to solve \eqref{eq:optimization_problem}, we initially assume that $b_i$ can take any non-negative real value. 
We also notice that for realistically large $b_i$ (i.e., $b_i\geq 5$), the distortion factor $\beta_i$ for a Gaussian input to a Lloyd–Max quantizer can be well approximated using high-rate distortion theory \cite{quantization} as\footnote{Although we assume a Lloyd–Max quantizer, the $2^{-2b_i}$ distortion scaling law is equally valid for other high-rate scalar quantizers, differing only in the associated multiplicative constant, which depends on the source distribution and quantizer design \cite[Eqn. (8)]{quantization}.}
\begin{align}
    \beta_i \approx \frac{\sqrt{3}\pi}{2} \cdot 2^{-2b_i}. \label{eq:high-rate-distortion}
\end{align}
Using this expression, the approximate sum-rate maximization problem in terms of the newly defined optimization variables $\overline{b}_i^2=b_i$ and $\overline{p}_i^2=p_i$ (which eliminates non-negativity constraints) can be expressed as:
\begin{align} \label{eq:approx-problem}
    \max_{\substack{\sum_{i=1}^{r} \overline{b}_i^2 = b_{\rm tot} \\ \sum_{i=1}^{r} \overline{p}_i^2 = P}} 
\sum_{i=1}^{r} \log_2 \left(1 + \frac{\left(1 - \frac{\sqrt{3}\pi}{2}2^{-2{\overline{b}_i^2}}\right) \overline{p}_i^2 s_i^2}{\frac{\sqrt{3}\pi}{2}2^{-2{\overline{b}_i^2}}{\overline{p}_i^2} s_i^2 + \sigma^2} \right).
\end{align}
Taking the derivative of the Lagrangian function with respect to $\overline{b}_i$ yields that either $\overline{b}_i=0$ or otherwise the following optimality condition:
\begin{align}
    \frac{2^{2\overline{b}_i^2}}{\overline{p}_i^2s_i^2} = \lambda,
\end{align}
where $\lambda > 0$ is a constant ensuring that $\sum_{i=1}^r \overline{b}_i^2 = b_{\rm tot}$. Taking the base-2 logarithm of both sides, we obtain for any non-zero $\overline{b}_i^2$
\begin{align}
   b_i=\overline{b}_i^2 = \underbrace{\log_2\left(\sqrt{\lambda}\right)}_{\triangleq \mu} + \log_2\left(\sqrt{p_i s_i^2}\right),
\end{align}
where $\mu$ is chosen to satisfy the total bit constraint, leading to the value $\mu^{\star}$. Thus, the optimal real-valued bit allocations are
\begin{align}
    b_i^{\star} = \max\left(0, \mu^{\star} + \log_2\left(\sqrt{p_i s_i^2}\right)\right), \quad i=1,\ldots,r.
\end{align}
Since $b_i$ must be an integer in practice, we can round $b_i^{\star}$ to the nearest integer to obtain a judicious near-optimal bit allocation.

To gain insights into the structure of the optimal power allocation, assume that $b_i^{\star} > 0$ for $i = 1, \ldots, r'$. Plugging $b_i^{\star}$ into the rate expression, we get
\begin{align}
    2^{-2b_i^{\star}} p_i s_i^2 = \frac{1}{\lambda^{\star}}, \quad \text{where } \lambda^{\star} = 2^{2\mu^{\star}}.
\end{align}
This leads to the following sum-rate expression in terms of $\overline{p}_i^2=p_i$
\begin{align}
    \max_{\sum_{i=1}^{r'} \overline{p}_i^2 = P} 
\sum_{i=1}^{r'} \log_2 \left(\frac{\overline{p}_i^2 s_i^2 + \sigma^2}{\frac{\sqrt{3}\pi}{2\lambda^{\star}}  + \sigma^2} \right).
\end{align}
Differentiating the Lagrangian corresponding to the above problem with respect to $\overline{p}_i$ yields $\overline{p}_i=0$ or otherwise for non-zero $\overline{p}_i$
\begin{align}
    p_i + \frac{\sigma^2}{s_i^2} = \lambda', \label{eq:pi_opt}
\end{align}
where $\lambda'$ is a constant. Then, the power coefficients are obtained as
\begin{align}
p_i = \max\left(0,\lambda'- \frac{\sigma^2}{s_i^2}\right), \quad i=1,\ldots,r',
\end{align}
where $\lambda'$ ensures $\sum_{i=1}^{r'} p_i = P$.

Based on the above analysis, an algorithmic framework can be established as follows. For each value of \( r' \in \{ 1, \ldots, r \} \), compute the power coefficients \( p_i \) as described above. Next, we determine the corresponding \( b_i^{\star} \), round them to the nearest integers, adjust for any surplus or deficit in the total bit count, and evaluate the resulting sum rate. The configuration that yields the highest sum rate is declared as the near-optimal solution. The complete procedure is summarized in Algorithm~1.

\begin{algorithm}
\caption{Bit and Power Allocation via Joint Optimization }
\begin{algorithmic}[1]
\State \textbf{Input:} Total transmit power $P$, total number of bits $b_{\rm tot}$, noise variance $\sigma^2$, channel strengths $s_i$ for $i=1,\ldots,r$
\State \textbf{Initialize:} $\text{best\_rate} \gets 0$, $r_{\max} \gets r$
\For{$r' = 1$ to $r_{\max}$}
    \State Compute $p_i = \max\left(0, \lambda' - \frac{\sigma^2}{s_i^2}\right)$ for $i=1,\ldots,r'$, where $\lambda'$ is selected to have $\sum_{i=1}^{r'} p_i = P$
    \State Denote the number of active subchannels for which $p_i>0$ by $r''$
    \State Compute $\mu^{\star}$ such that $\sum_{i=1}^{r''} b_i^{\star} = b_{\rm tot}$ with
    \[
    b_i^{\star} = \max\left(0, \mu^{\star} + \log_2\left(\sqrt{p_i s_i^2}\right)\right)
    \]
    \State Round $b_i^{\star}$ to nearest integer to obtain $\tilde{b}_i$
    \State Compute $\Delta = \sum_{i=1}^{r''} \tilde{b}_i - b_{\rm tot}$

    \If{$\Delta > 0$}  \Comment{Too many bits assigned}
        \State \textbf{Greedy Bit Removal:}
At each step, a bit is removed from the stream that causes the smallest rate loss, and this is repeated until the total bit budget is satisfied.
    \ElsIf{$\Delta < 0$}  \Comment{Not enough bits assigned}
        \State \textbf{Greedy Bit Addition:}
        At each step, a bit is added to the stream that causes the largest rate improvement, and this is repeated until the total bit budget is satisfied.
    \EndIf

    \State Compute achievable sum rate with final $\tilde{b}_i$ and $p_i$
    \If{sum rate $>$ best\_rate}
        \State Store current configuration as best
        \State Update $\text{best\_rate}$
    \EndIf
\EndFor
\State \Return $\{p_i, b_i\}$ yielding highest sum rate
\end{algorithmic}
\end{algorithm}

\section{Asymptotic Behavior at High SNRs} \label{eq:section-asymptotic}

In the high-SNR regime where \( s_i^2 / \sigma^2 \to \infty \) for \( i = 1, \ldots, r \), the approximate sum rate in \eqref{eq:approx-problem} approaches the upper limit
\begin{align}
    \sum_{i=1}^{r} \log_2 \left(1 + \frac{1 - \frac{\sqrt{3}\pi}{2}2^{-2b_i}}{\frac{\sqrt{3}\pi}{2}2^{-2b_i}} \right), \label{eq:sumrate-highSNR}
\end{align}
which is independent of the power allocation. By differentiating the Lagrangian corresponding to \eqref{eq:sumrate-highSNR} under the total bit constraint \( \sum_{i=1}^{r} b_i = b_{\rm tot} \), it follows that the optimal solution in the high SNR regime is to assign an equal number of bits to each active stream: $b_i = b_{\rm tot}/r$. 

The question is how many streams, $r'$, to activate at a high but finite SNR. We will propose an algorithm that determines the solution by activating one stream at a time. Assume that in the $r'$th iteration, the strongest $r'$ data streams are active. When \( b_{\rm tot} \) is not divisible by \( r' \), perfect uniformity may not be feasible. In this case, the remaining bits must be distributed across the active streams, and we assign them to the strongest subchannels (i.e., those with the largest \( s_i^2 \)) in analogy with water-filling, which allocates more power to stronger subchannels. Hence, each active stream is quantized using either \( b^{\star} \) or \( b^{\star} + 1 \) bits, where \( b^{\star} = \left\lfloor \frac{b_{\rm tot}}{r} \right\rfloor \) denotes the largest integer less than or equal to \( \frac{b_{\rm tot}}{r} \). The bit assignments are chosen such that the total number of quantization bits remains equal to \( b_{\rm tot} \). 

Under a uniform quantization resolution with a fixed number of bits per stream, which is asymptotically optimal, the data rate maximization problem reduces to a power allocation problem:
\begin{align}
    \max_{\sum_{i=1}^{r'} p_i = P,\quad p_i\geq 0, \forall i} 
    \sum_{i=1}^{r'} \log_2 \left(1 + \frac{(1 - \beta_i) p_i s_i^2}{\beta_ip_i s_i^2 + \sigma^2 } \right). \label{eq:power-allocation}
\end{align}
Defining $g_i(x) = \log_2\left(1+\frac{(1-\beta_i)x}{\beta_i x+1}\right)$, we can write \eqref{eq:power-allocation} as
\begin{align}
    \max_{\sum_{i=1}^{r'} p_i = P,\quad p_i\geq 0, \forall i} 
    \sum_{i=1}^{r'} g_i\left(p_i\frac{s_i^2}{\sigma^2}\right).
\end{align}
Noting that $g_i(x)$ is a concave function and has an invertible derivative function for non-negative $p_i\geq 0$, we have the optimal power allocation from \cite[Thm.~3.16]{Bjornson2013d} as
\begin{align}
    p_i=\max\left(0, \frac{\sigma^2}{s_i^2}g_i'^{-1}\left( \frac{\sigma^2}{\nu s_i^2}\right)\right), \quad i=1,\ldots, r',
\end{align}
where $\nu\geq 0$ is selected to have $\sum_{i=1}^{r'} p_i=P$. The optimal value of $\nu$ can be found by a bisection search by noting that $g_i'^{-1}(\nu)$ is an increasing function of $\nu$.

We note that if some subchannels remain inactive after the power allocation step, their unused bit budget is redistributed uniformly among the active subchannels. This procedure is repeated for $r'=1,\ldots,r$, where for each value of $r'$, only the $r'$ strongest subchannels are active. At the end of the algorithm, the candidate that results in the best sum rate is declared as the solution.

\section{Numerical Results}
\label{section:results}

In this section, we evaluate the performance of the proposed joint bit and power allocation algorithm by comparing it with several benchmark schemes. The result of Algorithm~1 is denoted by ``JBP-Alloc,'' referring to joint bit and power allocation. As a reference, we also consider the ideal case without fronthaul quantization, where the optimal solution is obtained using the classical water-filling algorithm. This upper-bound performance is denoted as ``Ideal'' in the figures. As another benchmark, we include the quantization-unaware water-filling solution, where the bits are uniformly distributed across the active data streams, determined by the power allocation as if there were no quantization, following the classical water-filling rule. However, the achieved rate is still evaluated using the actual quantized model. In the figures, this method is labeled as ``UnawareWF''.

In addition, we include a uniform bit allocation scheme motivated by the asymptotic high-SNR analysis in Section~\ref{eq:section-asymptotic}. 
This algorithmic implementation, referred to as ``UB-Alloc'', enforces uniform integer-valued bit allocation and evaluates the achievable rate using the exact finite-SNR expression. Lastly, we consider a higher-complexity greedy bit allocation algorithm as a benchmark. In this approach, bits are assigned one by one to the subchannel that yields the highest marginal improvement in the sum rate. After each bit assignment, optimal power allocation is computed using the same method as in Section~\ref{eq:section-asymptotic}.  At each of the $b_{\mathrm{tot}}$ steps, the algorithm tests all $r$ candidate streams by temporarily adding one bit and recomputing the resulting optimal power allocation via a bisection-based routine.
Hence, the bisection-based power allocation is executed $b_{\mathrm{tot}}\cdot r$ times in total. On the other hand, the proposed JBP-Alloc algorithm requires implementing bisection search only $2r$ times. This greedy scheme is denoted by ``Greedy'' in the figures.

For \( b = 1, \ldots, 5 \), we use the distortion factors of Lloyd–Max quantizers provided in \cite[Tab.~4.1]{jain1989fundamentals}. For higher bit values, the distortion factor \( \beta \) is computed using the high-rate distortion approximation given in~\eqref{eq:high-rate-distortion}. We consider a MIMO channel with $K=16$ transmit antennas and $M=128$ receive antennas, both arranged as uniform linear arrays.

We define the SNR per antenna as $\mathrm{SNR} = \frac{P \sum_{i=1}^{r} s_i^2}{MK\sigma^2}$.
In the figures, the channel matrices 
$\vect{H}$ follow a Rician fading model with 200 NLOS paths and one LOS path, where $\kappa$ denotes the Rician $\kappa$-factor. The complex path gains of the NLOS components are modeled as independent, zero-mean complex Gaussian variables with equal variance across the paths. The singular values are subsequently scaled to satisfy the specified SNR, i.e., $
\mathrm{SNR} = \frac{P \sum_{i=1}^{r} s_i^2}{MK\sigma^2}=10$\,dB. 
All results are averaged over 100 independent channel realizations.

In Fig.~\ref{fig1}, we set $\kappa = 0$\,dB and plot the sum rate as a function of $b_{\rm tot}$. Since the ``Ideal'' case is independent of the number of quantization bits, its performance appears as a horizontal line and, as expected, serves as an upper bound to the achievable data rate under quantization. For $\kappa = 0$\,dB, the singular values are relatively similar, which explains why the UB-Alloc method, allocating bits uniformly, achieves almost the same performance as the JBP-Alloc method. Furthermore, the ``Greedy'' method yields the same performance but at a significantly higher computational cost compared to the proposed methods. In contrast, neglecting quantization and applying classical water-filling results in up to $20\%$ loss in sum rate performance. Moreover, it is observed that, under a limited bit budget, the quantization-unaware scheme activates more subchannels than the proposed scheme.

\begin{figure}[t!]
        \centering
	\begin{overpic}[width=0.99\columnwidth,trim=0cm 0cm 0.5cm 0.5cm,clip,tics=10]{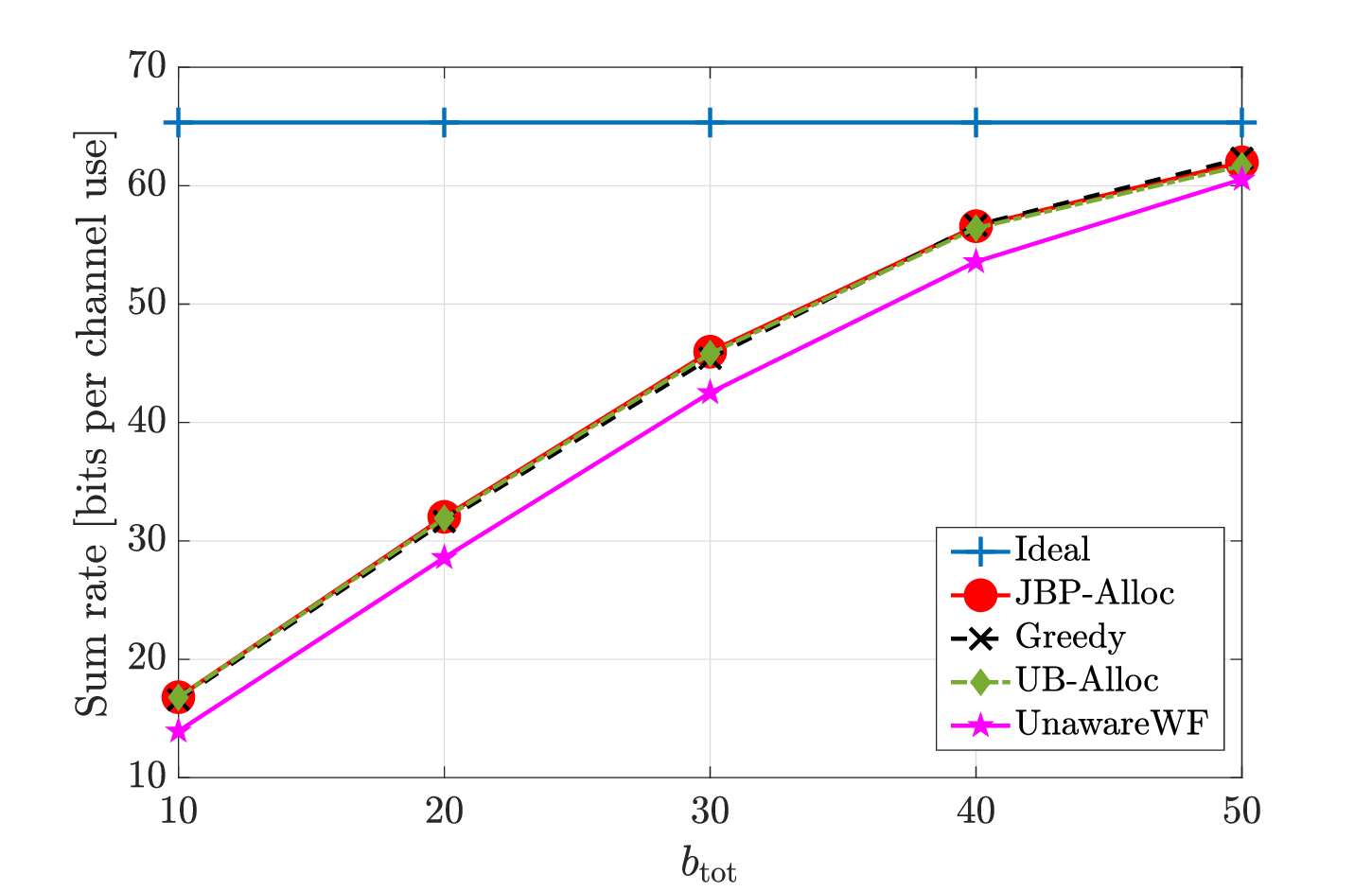}
\end{overpic} 
\vspace{-2mm}
        \caption{The sum rate in terms of the total number of quantization bits for $\kappa=0$\,dB. }
        \label{fig1}
        \vspace{-4mm}
\end{figure}

\begin{figure}[t!]
        \centering
	\begin{overpic}[width=0.99\columnwidth,trim=0cm 0cm 0.5cm 0.5cm,clip,tics=10]{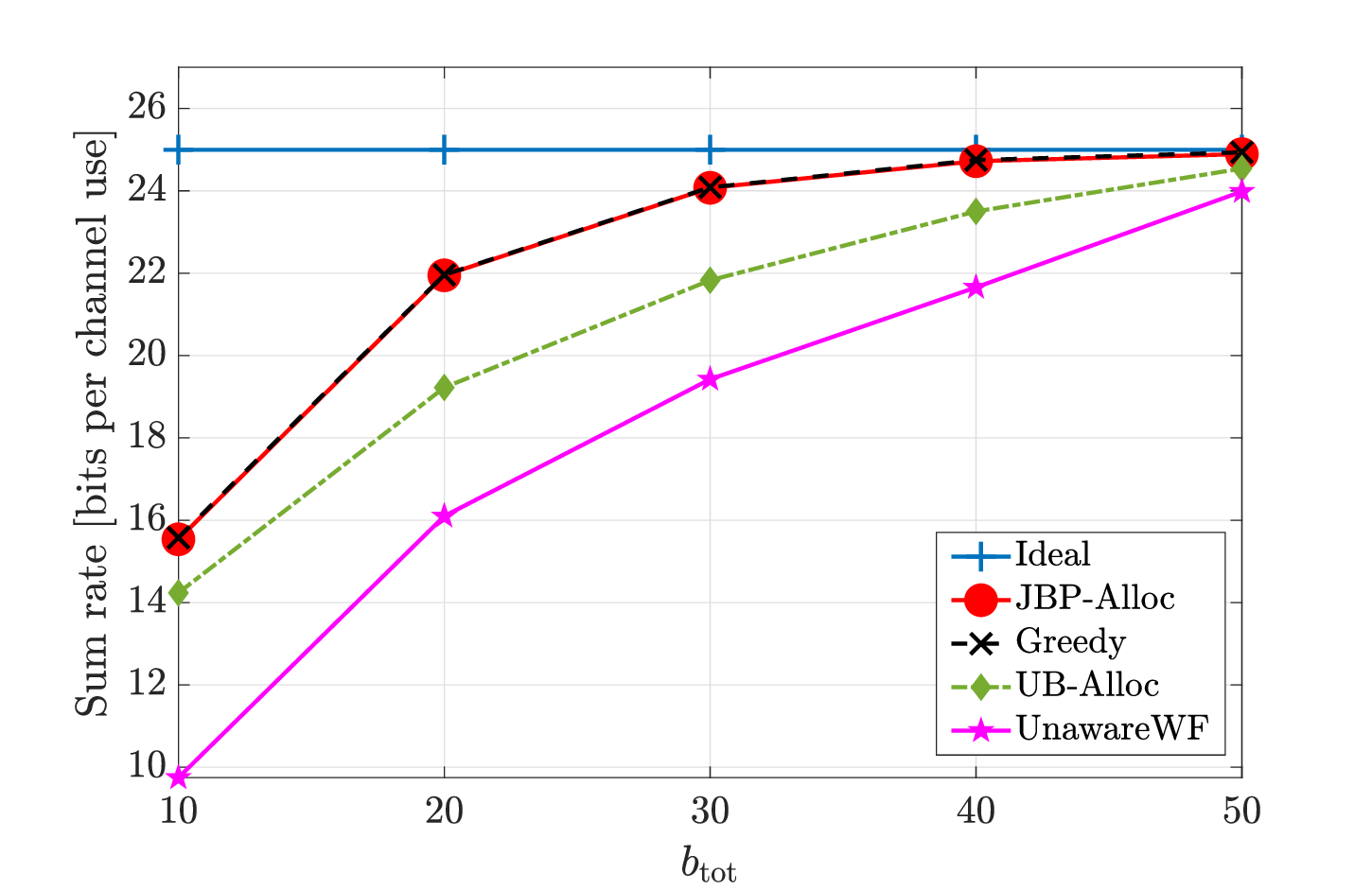}
\end{overpic} 
\vspace{-2mm}
        \caption{The sum rate in terms of the total number of quantization bits for $\kappa=20$\,dB.  }
        \label{fig2}
        \vspace{-4mm}
\end{figure}

In Fig.~\ref{fig2}, we increase $\kappa$ to $20$\,dB, which leads to larger variations among the singular values of the channel. As a result, the sum rate decreases since fewer subchannels are useful for transmission. In this scenario, a much smaller number of subchannels are expected to remain active, and thus 100 bits are sufficient for the proposed JBP-Alloc method to achieve the performance of the ``Ideal'' case. The ``Greedy'' method provides almost the same performance but at a considerably higher computational complexity. Unlike the previous figure, we now observe a performance degradation with UB-Alloc, since allocating bits uniformly is no longer optimal. More importantly, the UnawareWF method yields a significantly smaller data rate. These results highlight the importance of the proposed JBP-Alloc method, which consistently achieves the best performance across all considered scenarios while requiring $b_{\rm tot}/2$ times less computational complexity than the ``Greedy'' technique.

\section{Conclusions}
\label{section:conclusions}

We investigated stream-adaptive quantization for fronthaul-constrained MIMO systems, where the streams are decoupled in the radio unit, but signal decoding is done at the baseband unit.  The proposed JBP-Alloc algorithm was shown to efficiently balance bit and power allocation, achieving nearly optimal performance with substantially lower computational complexity than greedy allocation. Our asymptotic analysis revealed that uniform bit allocation becomes optimal at high SNR, simplifying system design in this regime. Numerical results highlight the importance of stream-adaptive fronthaul quantization and demonstrate that significant performance gains can be achieved by carefully adapting quantization and power allocation to channel conditions.

\bibliographystyle{IEEEtran}
\bibliography{IEEEabrv,refs}

\end{document}